\newcommand{\R}{\mathbb{R}}  
\newcommand{\C}{\mathbb{C}} 
\newcommand{\E}{\mathbb{E}}
\newtheorem{theorem}{Theorem}
\newtheorem{definition}{Definition}
\newtheorem{assumption}{Assumption}
\begin{document}
\title{Over-the-Air Decentralized Federated Learning}

\author{
	\IEEEauthorblockN{Yandong Shi\IEEEauthorrefmark{1}\IEEEauthorrefmark{2}\IEEEauthorrefmark{3}, Yong Zhou\IEEEauthorrefmark{1} , and Yuanming Shi\IEEEauthorrefmark{1}}
	\IEEEauthorblockA{\IEEEauthorrefmark{1}School of Information Science and Technology, ShanghaiTech University, Shanghai 201210, China}
	\IEEEauthorblockA{\IEEEauthorrefmark{2}Shanghai Institute of Microsystem and Information Technology, Chinese Academy of Sciences, China }
	\IEEEauthorblockA{\IEEEauthorrefmark{3}University of Chinese Academy of Sciences, Beijing 100049, China \\
		Email: \{shiyd, zhouyong, shiym\}@shanghaitech.edu.cn}
	
}
\maketitle	
\let\thefootnote\relax\footnotetext{This work was supported in part by the National Natural Science Foundation of China (NSFC) under grant 62001294.}

\begin{abstract}
In this paper, we consider decentralized federated learning (FL) over wireless networks, where over-the-air computation (AirComp) is adopted to facilitate the local model consensus in a device-to-device (D2D) communication manner.
However, the AirComp-based consensus phase brings the additive noise in each algorithm iterate and the consensus needs to be robust to wireless network topology changes, which introduce a coupled and novel challenge of establishing the convergence for wireless decentralized FL algorithm.
To facilitate consensus phase, we propose an AirComp-based DSGD with gradient tracking and variance reduction (DSGT-VR) algorithm, where both precoding and decoding strategies are developed for D2D communication.
Furthermore, we prove that the proposed algorithm converges linearly and establish the optimality gap for strongly convex and smooth loss functions, taking into account the channel fading and noise.
The theoretical result shows that the additional error bound in the optimality gap depends on the number of devices.
Extensive simulations verify the theoretical results and show that the proposed algorithm outperforms other benchmark decentralized FL algorithms over wireless networks.
\end{abstract}

\section{Introduction}\label{introduction}
Federated learning (FL), as a novel type of distributed machine learning \cite{mcmahan2017communication}, has received a growing interest recently, both from academia and industry.
With FL, many mobile devices collaboratively train a global model under the orchestration of a parameter server (PS), while keeping the training data unmoved.
The PS only receives the local model rather than raw data from mobile devices, thereby significantly reducing the overhead of data collection and preserving additional privacy \cite{rah2020survey, yuan2020edgeai}.
FL has enabled a wide range of applications in industrial Internet of Things (IIoT) \cite{lu2020iiot, pham2021fusion, savazzi2021opportunities}, e.g., autonomous driving and collaborative robotics, in which high levels of security, privacy and reliability are demanded.
In future wireless networks, the driving applications of FL including resource management, channel estimation and signal detection, edge caching, and computation offloading \cite{yang2021federated ,lim2020fl}, which also poses unique challenges including statistical heterogeneity, communication cost and resource allocation \cite{nik2020fl}.
It thus becomes critical to implement FL for the future 6G wireless networks \cite{kb20196g, yang2020flris}.

Due to limited radio resources, communication is a fundamental bottleneck in FL over wireless networks \cite{li2021delay, wang2020federated}. 
The digital \cite{chen2020conver, chen2020joint, wen2020joint} and analog \cite{yang2020over, sery2020flmac, mo2020fl} transmission schemes have been proposed to support FL over wireless networks.
In particular, the authors in \cite{wen2020joint} partitioned a large-scale learning task over multiple resource-constrained edge devices, where joint parameter-and-bandwidth allocation was proposed to reduce the total learning-and-communication latency.
In \cite{chen2020conver, chen2020joint}, a joint learning, wireless resource allocation, and user selection problem was formulated to minimize FL training loss and FL convergence time.
However, the orthogonal channels are required in digital FL systems which suffer from hugely demanding bandwidth especially when number of edge devices is large \cite{sery2020flmac}.
As for analog transmission scheme, over-the-air computation (AirComp) was proposed to support fast model aggregation for FL by exploiting the superposition property of multi-access channel (MAC), where multiple edge devices share the same frequency channel \cite{yang2020over, sery2020flmac, mo2020fl}.
Specifically, a joint device selection and beamforming design was proposed to improve the statistical learning performance for FL \cite{yang2020over}.
The authors in \cite{sery2020flmac, mo2020fl} focused on designing a stochastic gradient descent (SGD) based algorithm over MAC and investigate the effect of channel fading and noise on the convergence analysis of FL.

The aforementioned studies require a central PS to orchestrate the training process. 
However, in some application scenarios, e.g., cooperative driving and robotics \cite{savazzi2021opportunities}, a central PS may not always be available and reliable when the number of edge devices is large \cite{lian2017can}.
The centralized FL also faces straggler’s dilemma \cite{cai2020leter} due to the heterogeneity of edge devices, i.e., the FL training speed is limited by the devices with slowest computation and worst channel conditions.
To overcome these challenges, device-to-device (D2D) communications based decentralized FL \cite{ozfatura2020decentralized, xing2020decentralized, sa2020de} was proposed, where each device only communicates with its neighbors.
In particular, \cite{sa2020de} considered digital transmission schemes in a joint learning and network simulation framework, to quantify the effects of model pruning, quantization and physical layer constraints for decentralized FL.
Due to limited wireless bandwidth resources, the authors in \cite{ozfatura2020decentralized, xing2020decentralized} proposed a decentralized stochastic gradient descent (DSGD) algorithm to improve the convergence performance in decentralized FL, where AirComp based D2D communication was developed to facilitate the consensus phase.
However, the AirComp-based consensus phase brings the channel fading and additive noise in each algorithm iterate.
On the other hand, the neighborhood weighted average involving the information of network topology performs the consensus phase. 
The decentralized FL algorithm thus needs to be robust against changes in network topology, otherwise it may not converge \cite{lian2017can}.
These introduce a coupled and unique challenge of establishing the convergence of decentralized FL algorithm over wireless networks. 

In this paper, we shall consider a decentralized FL model over wireless networks, where no central PS exists to orchestrate the training process.
To investigate and facilitate consensus phase over wireless network, we propose an AirComp-based DSGT-VR algorithm in decentralized FL, where both precoding and decoding strategies at devices are developed to guarantee algorithm convergence.
In addition, the gradient tracking and variance reduction techniques are involved in the proposed algorithm, which can further improve algorithm convergence performance.
We analyze the performance of AirComp-based DSGT-VR algorithm theoretically by introducing the auxiliary variable in consensus phase, i.e., the mean of all local parameters.
For strongly convex and smooth local loss functions, we prove that the proposed algorithm converges linearly and also establish the optimality gap, taking into account the channel fading and noise.
In addition, the convergence result shows that the additive error bound in the optimality gap depends on the number of devices.
Numerical experiments are conducted to validate the theoretical analysis and demonstrate the superior performance of the proposed algorithm over wireless networks.
 
\subsubsection*{Notations}
Throughout this paper, we denote the cardinality of set $A$ by $|A|$, $\ell_2$-norm of vector $\bm{x}$ by $\Vert \bm{x} \Vert_2$ and the second largest
singular value of matrix $\bm{W}$ by $\interleave \bm{W} \interleave$.

\section{System Model and Problem Formulation}
\subsection{System Model}\label{model}
As shown in Fig. \ref{fig: system model}, we consider a federated learning system supported by a  decentralized wireless communication network in decentralized setting where no central PS exists to coordinate training process of all edge devices \cite{kairouz2019advances}.
We denote $\mathcal{N} = \{1,...,N\}$ as the set of devices and $\mathcal{G} = (\mathcal{N}, \mathcal{E})$ as an undirected graph with vertex set $\mathcal{N}$ and edge set $\mathcal{E}$ that represents the set of communication links.
The set of connected neighbors of device $i$ is denoted as $\mathcal{N}_i= \{ j|(i,j) \in \mathcal{E} \} $.
Each device $n \in \mathcal{N}$ has its own data set $\mathcal{D}_n$ and all devices aim to collaboratively learn a common machine learning model by communicating with each other through device-to-device communication link over graph $\mathcal{G}$.

The goal of this paper is to learn a global model by tackling the distributed stochastic optimization problem
\begin{align}\label{eq: problem}
\underset{\bm{\theta} \in \R^d }{\operatorname { minimize }}~~ F(\bm{\theta})\triangleq \frac{1}{N} \sum_{i=1}^{N} f_{i}\left(\bm{\theta}\right) ,
\vspace{-1mm}
\end{align}
where $F(\bm{\theta})$ is the global loss function and $f_i(\bm{\theta}) = \frac{1}{|\mathcal{D}_i|} \sum_{\xi \in \mathcal{D}_i} f_{i, \xi}(\bm{\theta})$ is the local loss function at device $i$.
In particular, $f_{i, \xi}(\bm{\theta})$ is the loss for the parameter $\bm{\theta} \in \R^d$ on a data sample $\xi$ at device $i$.
In the distributed learning process, each device $i \in \mathcal{N}$ has a local parameter vector $\bm{\theta}_i^{t}$ that approximates the
solution of problem (\ref{eq: problem}) at the $t$-th iteration.

\subsection{DSGD with Gradient Tracking and Variance Reduction}
When the devices communicate over error-free orthogonal links, the commonly adopted method for solving problem (\ref{eq: problem}) is DSGD \cite{koloskova2020unified}.
We denote $\bm{W}$ as the mixing matrix encoding the network structure at iteration $t$.
\begin{definition}[Mixing matrix]
Mixing matrix $\bm{W} $ is a symmetric doubly stochastic matrix whose $ij$-th entry $w_{ij} > 0$ indicates that device $i$ and $j$ are connected, 
\begin{align}
\bm{W} \in \mathcal{W} ,\bm{W} = \bm{W}^T,\bm{W}\bm{1} = \bm{1}, \bm{1}^T\bm{W} = \bm{1}^T,
\end{align}
where $\mathcal{W} = \{\bm{W} \in [0,1]^{N \times N} | w_{ij} = 0 ~\text{if} (i,j) \notin \mathcal{E} ~\text{and}~ i \neq j \}$.
\end{definition}

With the DSGD algorithm, the information exchange can only occur between connected devices. 
The algorithm for each device $i \in \mathcal{N}$ at iteration $t$ consists of two phases:
\begin{enumerate}
\item Computation phase: each device computes a stochastic gradient based on a random data sample $\xi_i^{t}$ and then performs stochastic gradient updates,
\begin{align}
\bm{\theta}_i^{t+ \frac{1}{2}} = \bm{\theta}_i^{t} - \alpha_t \nabla f_{i, \xi_i^{t}}(\bm{\theta}_i^t).
\end{align}
\item Consensus phase: devices average their local model based on mixing matrix,
\begin{align}\label{conse}
\bm{\theta}_i^{t+1} = \sum_{j \in \mathcal{N}_i} w_{ij} \bm{\theta}_j^{t+ \frac{1}{2}}.
\end{align}
\end{enumerate}
\begin{figure}[tbp]
        \centering
        \includegraphics[width=0.7\linewidth]{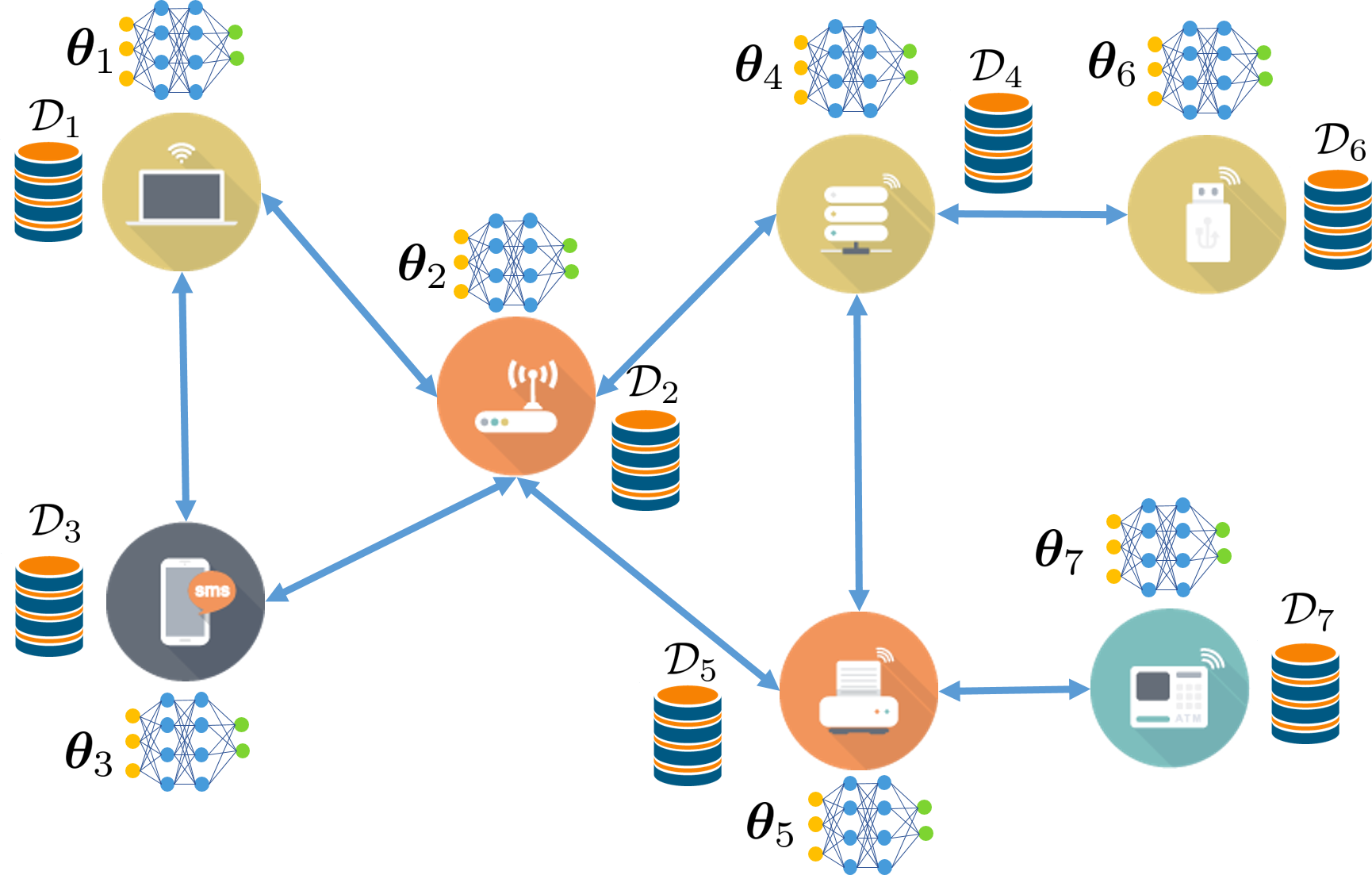}
        \caption{Illustration of the decentralized FL model over wireless networks consisting of $7$ devices.}
        \label{fig: system model}
        \vspace{-0.7cm}
\end{figure}

However, DSGD decays linearly with a fixed step size $\alpha_t \equiv \alpha$, but performs inexact convergence (the steady error of DSGD has an additional bias).
Although properly reducing $\alpha_t$ enables exact convergence (each $\bm{\theta}_n^t$ converges to the same exact solution), DSGD convergences slowly both in practice and theory \cite{xin2020magdso}.
With the help of dynamic average-consensus, the DSGD with gradient tracking (DSGT) removes the bias which characterizes the difference between local loss function $f_i$ in DSGD, thereby achieving a much lower steady-state error when the data sample across devices are largely heterogeneous \cite{pu2020distributed}.
And by adding variance reduction to DSGT, \cite{xin2020magdso} shows that the DSGT with variance reduction (DSGT-VR) leads to an exact linear convergence with a constant step-size.

In DSGT-VR, each device $i$ maintains a gradient table $\{ \nabla f_{i,j}(\hat{\bm{\theta}}_{i,j}) \}_{j=1}^{|\mathcal{D}_i|}$, where $\hat{\bm{\theta}}_{i,j}$ denotes the most recent parameter when $f_{i,j}$ was computed, to store all local gradients.
Then at iteration $t$, each device $i$ randomly chooses a data sample $\xi_i^t$ and computes the unbiased gradient $\bm{g}_i^t$ as follows
\begin{align}\label{unbag}
\bm{g}_i^t = \nabla f_{i,\xi_i^t}(\bm{\theta}_i^t) - \nabla f_{i,\xi_i^t}(\hat{\bm{\theta}}_{i, \xi_i^t}) + \frac{1}{|\mathcal{D}_i|}\sum_{j=1}^{|\mathcal{D}_i|} \nabla f_{i,j}(\hat{\bm{\theta}}_{i,j}) .
\end{align}
Next, we replace the recent gradient $\nabla f_{i,\xi_i^t}(\bm{\theta}_i^t)$ with $\nabla f_{i,\xi_i^t}(\hat{\bm{\theta}}_{i, \xi_i^t})$ in gradient table.
Last, the gradient estimator $ \bm{d}_i^{t+ 1}$ can be computed by using gradient tracking technology,
\begin{align}\label{gradient_est}
\bm{d}_i^{t+ 1} = \sum_{j \in \mathcal{N}_i} w_{ij}\bm{d}_i^{t} + \bm{g}_{i}^{t+1} - \bm{g}_{i}^{t}.
\end{align}

\subsection{Communication Model}
In this section, we focus on the design of communication model for AirComp-based DSGT-VR algorithm in decentralized setting.
The designed communication model consists of the following two parts:
\begin{itemize}
\item \textbf{Scheduling}: To cope with wireless interference, different devices should be scheduled to communicate in different transmission blocks.
\item \textbf{Transmission}: The transmission strategies are designed to support consensus phase over wireless networks.
\end{itemize}
\subsubsection{Scheduling}
To improve the communication performance, we consider the D2D communication \cite{te2014d2d} in this paper.
Since the precoding strategy (\ref{precoding}) is designed for target device to guarantee the convergence of proposed algorithm, it brings interference to other devices. 
So we focus on design interference-free scheduling by scheduling enrolled receiving devices as active ``PS" in different transmission block.
In this interference-free schedule scheme, there are no two enrolled receiving devices connected to the same device.
A naive scheduling policy is to schedule to the enrolled receiving devices one by one, which require $N$ transmission blocks during one consensus phase.
The graph coloring algorithm can be adopted to find a suitable scheduling policy that decreases the number of transmission blocks \cite{molloy2002graph}.
The authors in \cite{ozfatura2020decentralized} has investigated the effect of scheduling policy.
In this paper we focus on the convergence versus the consensus iteration $t$ over wireless networks.

\subsubsection{Transmission}
To enhance the spectral efficiency, AirComp has been envisioned to have a wide range of applications in the areas of consensus \cite{zhu2020over}.
In AirComp, the receiver device receives a superposition of the signals that simultaneously transmitted by its neighbor \cite{zhi2020air, dong2020air}.
Block flat-fading channel are considered in this paper and one transmission block contains $d$ time slots.
Hence, at the transmission block $t$, the received signal at enrolled receiving device $i$ can be written as 
\begin{align}
\bm{y}_i^t = \sum_{j \in \mathcal{N}_i} h_{ij}^t\bm{x}_j^t + \bm{z}_i^t,
\end{align}
where $h_{ij}^t \in \C$ is the channel coefficient between devices $i$ and $j$ at transmission block $t$, the transmit signal $\bm{x}_j^t$ encodes the information of the local model $\bm{\theta}_j^t$ and $\bm{z}_i^t \in 
\C^d$ represents the additive noise vector.
In addition, channel inputs are subject to a peak power constraint, i.e., $
\E[\Vert \bm{x}_i^t \Vert_2^2 ] \leq P, \forall i \in \mathcal{N}$.

We assume the computation phase at each device to be instantaneous after transmission.
Based on the received signal $\bm{y}_i^{t}$, device $i$ can average its neighbor's local model and get close to the solution of problem (\ref{eq: problem}).
However, due to the influence of channel fading and transmission noise, the received signal is distorted, which will negatively influence the convergence of DSGT-VR algorithms.
Hence, we aim to develop a reliable transmission strategy based on DSGT-VR to support FL in decentralized setting.

\subsection{AirComp-based DSGT-VR algortihm}
First, we develop precoding and decoding strategies at devices to guarantee the convergence of AirComp-based DSGT-VR.
Each consensus iteration $k$ includes $R \le N$ transmission blocks and the devices are scheduled as active ``PS" in one transmission block.  

In the precoding phase before AirComp, we assume that perfect CSI are available at all devices.
Since the topology of the communication network is known, each device contains the mixing matrix weights among its connected neighbors, i.e., $w_{ij}, \forall j \in \mathcal{N}_i$.
In transmission block $t$, the signal $\bm{x}_j^t, j \in \mathcal{N}_i$ transmitted to device $i$ is precoded as
\begin{align}\label{precoding}
\bm{x}_j^t = \sqrt{p^t} w_{ij} \frac{(h_{ij}^t)^H}{|h_{ij}^t|^2} \bm{\theta}_j^t,
\end{align}
where $\sqrt{p^t} = \min_{i} \frac{|h_{ij}^t| \sqrt{P}}{\|\bm{\theta}_i^t\|_2 }, \forall j$ is a uniform scaling factor to satisfy the peak power constraint.
However, in decentralized setting, it is observed that the weak channels lead to the small transmit power.
In this case, the influence of channel noise will significantly increase.
Therefore, we form the connectivity graph $\mathcal{G}$ by connecting the devices $i$ and $j$ if the channel gain $|h_{ij}|$ is above a certain threshold $\gamma$ \cite{vkn2011dynamic,vkn2012interfer,jafar2014tolo}.
Then, for device $i$, we get all its connected neighbors $\mathcal{N}_i$, which is given by
\begin{align}
\mathcal{N}_i = \{j| (i,j) \in \mathcal{E}, \text{if}~|h_{ij}| > \gamma\}.
\end{align}

Second, in the decoding phase and based on (\ref{precoding}), the received signal at device $i$ after AirComp can be written as
\begin{align}\label{Aircomp}
\bm{y}_i^t &= \sum_{j \in \mathcal{N}_i  \backslash \{i\}} \sqrt{p^t}  w_{ij} \bm{\theta}_j^t + \bm{z}_i^t. 
\end{align}
Then the device $i$ can decode the local model signal $\bm{y}_i^t$, by
\begin{align}\label{decoding}
\bm{\theta}_i^{t+1} &= \frac{1}{\sqrt{p^t}} \bm{y}_i^t + w_{ii}\bm{\theta}_i^t = \sum_{j \in \mathcal{N}_i} w_{ij} \bm{\theta}_j^t + \widetilde{\bm{z}}_i^t ,
\end{align}
where $ \widetilde{\bm{z}}_i^t = \frac{\bm{z}_i^t}{\sqrt{p}^t} \sim \mathcal{C}\mathcal{N}(0, \frac{\sigma^2}{p^t})$ is the channel noise.
Since equation (\ref{decoding}) is a superposition of the local model parameter, we apply AirComp to perform the consensus phases in DSGT-VR algorithm in wireless communication.

The AirComp techniques operate over star graphs in a pair of consecutive phases of time slots \cite{sery2020flmac,mo2020fl}.
However, in decentralized setting, since there is no central PS, each device can play a role of central PS through AirComp over star graphs \cite{kairouz2019advances}.
Specially, in the first phase of time slot, the device $i$ at the center of a star sub-graph receive the signals simultaneously transmitted by all its connected neighbors in $\mathcal{N}_n$ with a superposition form (\ref{Aircomp}).
In the second phase of time slot, this central device $i$ broadcasts its update gradient estimator (\ref{gradient_est}) to all its connected neighbors in $\mathcal{N}_n$.
In this paper, the devices broadcast the gradient estimator rather than model parameter \cite{xing2020decentralized,ozfatura2020decentralized} in downlink communication.
The complete implementation of AirComp-based DSGT-VR is summarized in Algorithm \ref{algorithm 2}.
Note that gradient estimator of each device is received in different transmission block via error-free link due to scheduling, then we can update gradient estimator based on (6) after $M$ transmission blocks.
And we leave the variance reduction techniques over noisy links for future work.

\begin{algorithm}[tbp]
  \caption{DSGT-VR with Over-the-Air Consensus} 
  \label{algorithm 2}
        \SetAlgoLined
        \SetKwInOut{Input}{Input}
        \SetKwInOut{Output}{Output}
        \SetKwFor{ParFor}{for each}{do in parallel}{end}
        \Input{Initial: $\bm{\theta}_i^0 = \bm{\theta}^0, \bm{d}_i^0 = \bm{g}_i^0 = \nabla f_{i}(\bm{\theta}_i^0), i \in \mathcal{N}$, step sizes $ \alpha$,  max iterations $T$, $\bm{W}$ and gradient table $\{ \nabla f_{i,j}(\hat{\bm{\theta}}_{i,j}) \}_{j=1}^{|\mathcal{D}_i|}, \hat{\bm{\theta}}_{i,j} = \bm{\theta}_i^0, \forall j$}
        \For{$t=1,2,\ldots,T$}{
        \For{$r = 1,2,\ldots, M$ transmission block}{
        \ParFor{$i \in \mathcal{N}$}{
        Updating $\bm{\theta}_i^{t+\frac{1}{2}} = \bm{\theta}_i^{t} - \alpha \bm{d}_i^t $.\\
        Get a data sample $\xi_n^{t+1}$ randomly, compute unbiased gradient (\ref{unbag}).\\
        Replace $\nabla f_{i,\xi_i^t}(\hat{\bm{\theta}}_{i, \xi_i^t})$ by $\nabla f_{i,\xi_i^t}(\bm{\theta}_i^t)$ in gradient table.\\
        Updating $\bm{d}_i^{t+\frac{1}{2}} = \bm{d}_i^{t} + \bm{g}_{i}^{t} - \bm{g}_{i}^{t-1}$.\\
        \uIf{device $i$ is scheduled as active ``PS"}{
        The device $i$ received signal via (\ref{Aircomp}) and decoding via (\ref{decoding}) to get $\bm{\theta}_i^{t+1}$.\\
        Broadcast $\bm{d}_i^{t+\frac{1}{2}}$ back to all its neighbors.\\
        }
        \Else{Precoding $\bm{\theta}_i^{t+\frac{1}{2}} $ via (\ref{precoding}) and send the precoded signal to all its neighbors $\mathcal{N}_i$.\\}
        }
        }
        Each device locally updates gradient eatimator $\bm{d}_i^{t+ 1} = \sum_{j \in \mathcal{N}_i} w_{ij}\bm{d}_i^{t+\frac{1}{2}}$. 
        } 
\end{algorithm}

\section{Convergence Analysis}
In this section, we provide the convergence analysis of the AirComp-based DSGT-VR.
The convergence results are established under the following assumptions.
\begin{assumption}\label{ass:1}
The local cost function $f_i$ is both $\mu$-strongly convex and $L$-smooth for any $i \in \mathcal{N}$.
\end{assumption}
\begin{assumption}\label{ass:2}
The network graph $\mathcal{G}$ is undirected and connected, i.e., there exists a path between any two devices.
\end{assumption}
\begin{assumption}
The local parameter is bounded by a universal constant $B \ge 0$, i.e., $\| \bm{\theta}_i^t \|^2 \leq B^2, \forall i,t$.
\end{assumption}
Assumption \ref{ass:2} implies that $\beta = \interleave \bm{W} - \frac{1}{N} \bm{1}_N \bm{1}_N^\top \interleave < 1$ \cite{xin2020GTSAGA}.
Based on above assumptions, we denote $M = \max_i |\mathcal{D}_i|, m = \min_i |\mathcal{D}_i|, \forall i \in \mathcal{N}$ and $\kappa = \frac{L}{\mu}$ be the condition number of the global loss function $F$.
The main convergence result of AirComp-based DSGT-VR is established as follows,
\begin{theorem}\label{thm: theorem 1}
Let $\bm{\theta}^\natural$ denote the solution of the distributed optimization problem (\ref{eq: problem}).
If Assumptions \ref{ass:1} and \ref{ass:2} hold and the step-size in AirComp-based DSGT-VR algorithm is such that
\begin{align}
\alpha = \min\{ \mathcal{O}(\frac{1}{\mu M}), \mathcal{O}(\frac{m(1-\beta)^2}{ML \kappa^2}) \} \notag ,
\end{align}
then $\forall t \geq 0 , \forall i \in \mathcal{N}$ and for some $0 < \rho <1$, it holds that
\begin{align}
\E\left[F(\bm{\theta}_i^{t})\right] - F(\bm{\theta}^\natural) \leq \frac{cL}{2} \rho^t + \frac{LN}{2(N-1)} \frac{d \sigma^2 B^2}{\gamma^2 P} \sum_{\tau=1}^{t} \rho^{t - \tau},
\end{align}
where $ c = \frac{N}{N-1} \| \bm{\theta}_i^0 - \overline{\bm{\theta}}^0 \|^2  + N \| \overline{\bm{\theta}}^0 - \bm{\theta}^{\natural} \|^2  $.
\end{theorem}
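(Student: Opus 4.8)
The plan is to rewrite the algorithm in stacked form, build a small linear inequality system for a handful of expected error metrics, show that system contracts geometrically under the stated step size, and unroll it. Stack the iterates as $\Theta^t=[(\bm\theta_1^t)^\top;\dots;(\bm\theta_N^t)^\top]$ and $D^t=[(\bm d_1^t)^\top;\dots;(\bm d_N^t)^\top]$, write $\bar{\bm\theta}^t=\frac1N\sum_i\bm\theta_i^t$, $\bar{\bm d}^t=\frac1N\sum_i\bm d_i^t$, and let $J=\frac1N\bm1\bm1^\top$. Using the decoding identity (\ref{decoding}) together with the step $\bm\theta_i^{t+\frac12}=\bm\theta_i^t-\alpha\bm d_i^t$ and the tracking update, the iterates obey $\Theta^{t+1}=\bm W(\Theta^t-\alpha D^t)+Z^t$ and $D^{t+1}=\bm W D^t+G^{t+1}-G^t$, where $Z^t$ stacks the AirComp noises $\widetilde{\bm z}_i^t$ and $G^t$ stacks the variance-reduced estimators $\bm g_i^t$ of (\ref{unbag}). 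Since $\bm g_i^t$ is conditionally unbiased, $\E[\bm g_i^t\mid\mathcal F^t]=\nabla f_i(\bm\theta_i^t)$, and the SAGA-type gradient table gives the variance bound $\E\|\bm g_i^t-\nabla f_i(\bm\theta_i^t)\|^2\le 2L^2\big(\|\bm\theta_i^t-\bm\theta^\natural\|^2+\frac1{|\mathcal D_i|}\sum_j\|\hat{\bm\theta}_{i,j}^t-\bm\theta^\natural\|^2\big)$ (using $L$-smoothness from Assumption~\ref{ass:1}). Because $\bm W$ is doubly stochastic, $\bar{\bm d}^t=\bar{\bm g}^t$, so the mean evolves as $\bar{\bm\theta}^{t+1}=\bar{\bm\theta}^t-\alpha\bar{\bm g}^t+\bar{\bm z}^t$.

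The one genuinely new estimate concerns the injected noise. Since $|h_{ij}^t|>\gamma$ on every retained edge and $\|\bm\theta_i^t\|^2\le B^2$ by Assumption~3, the uniform scaling satisfies $p^t=\min_i|h_{ij}^t|^2P/\|\bm\theta_i^t\|_2^2\ge\gamma^2P/B^2$, hence $\E\|\widetilde{\bm z}_i^t\|^2\le d\sigma^2B^2/(\gamma^2P)=:\nu$ for all $i,t$. Projecting $Z^t$ onto the disagreement and consensus subspaces gives $\E\|(I-J)Z^t\|^2\le(N-1)\nu$ and $\E\|\bar{\bm z}^t\|^2\le\nu/N$; the interplay of these two, once stacked errors are converted back to a single device's error, is what produces the $N/(N-1)$ prefactor in the theorem. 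Crucially, the scheduling argument means the gradient estimators are broadcast over error-free links, so no noise enters the $D$-recursion or the table.

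Next I would track the expected vector $\bm v^t=\big(\E\|\Theta^t-\bm1\bar{\bm\theta}^t\|^2,\ \E\|\bar{\bm\theta}^t-\bm\theta^\natural\|^2,\ \E\|D^t-\bm1\bar{\bm d}^t\|^2,\ \E\,\tfrac1N\sum_i\tfrac1{|\mathcal D_i|}\sum_j\|\hat{\bm\theta}_{i,j}^t-\bm\theta^\natural\|^2\big)^\top$ and derive a component-wise inequality $\bm v^{t+1}\le\bm A(\alpha)\,\bm v^t+\bm b$ with $\bm b=\nu(c_1,c_2,0,0)^\top$. The mean row uses $\mu$-strong convexity and $L$-smoothness of $F$, giving the descent factor $1-\mu\alpha$ (valid for $\alpha\le1/L$) plus perturbations from the consensus error and the estimator variance; the consensus and tracking rows use $\interleave\bm W-J\interleave=\beta<1$ from Assumption~\ref{ass:2}; the table row contracts with factor $1-1/M$ since each stored point is refreshed with probability $1/|\mathcal D_i|\ge1/M$. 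A direct but tedious computation shows that for $\alpha=\min\{\mathcal O(1/(\mu M)),\ \mathcal O(m(1-\beta)^2/(ML\kappa^2))\}$ there is a positive weighting under which $\bm A(\alpha)$ has spectral radius at most some $\rho\in(0,1)$, so that $\bm v^t\le\rho^t\,C_0\,\bm v^0+\big(\sum_{\tau=1}^t\rho^{t-\tau}\big)C_1\,\bm b$ entrywise for fixed constants $C_0,C_1$. Bounding $\|\bm\theta_i^t-\bm\theta^\natural\|^2$ by a combination of the first two entries of $\bm v^t$, identifying the initial constant with $c=\frac{N}{N-1}\|\bm\theta_i^0-\bar{\bm\theta}^0\|^2+N\|\bar{\bm\theta}^0-\bm\theta^\natural\|^2$, and finishing with $F(\bm\theta_i^t)-F(\bm\theta^\natural)\le\frac L2\|\bm\theta_i^t-\bm\theta^\natural\|^2$ yields the claim.

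The main obstacle is the simultaneous contraction of the $4\times4$ matrix $\bm A(\alpha)$: one must pin down the cross-coupling constants among consensus error, gradient-tracking error, estimator variance, and table error precisely enough that a single step-size window of the claimed order forces a common rate $\rho<1$. The SAGA-type table term is the bottleneck — its intrinsic rate $1-1/M$ is arbitrarily close to $1$, which is exactly what drives the $1/M$ scaling in $\alpha$ and, through the smallest dataset $m$, the factor $m(1-\beta)^2/(ML\kappa^2)$. Relative to the noiseless DSGT-VR analysis, the only extra work is the uniform lower bound on $p^t$ and the bookkeeping that confines $\nu$ to the $\Theta$-rows of the system, so it surfaces only as the additive geometric sum $\sum_\tau\rho^{t-\tau}$ and never inside the contraction factor.
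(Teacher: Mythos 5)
Your proposal is correct and follows the same skeleton as the paper's proof: bound $F(\bm{\theta}_i^t)-F(\bm{\theta}^\natural)$ by $\tfrac{L}{2}\|\bm{\theta}_i^t-\bm{\theta}^\natural\|^2$ via smoothness, split the error into a consensus deviation from $\overline{\bm{\theta}}^t$ plus a mean optimality gap, bound the injected noise by $p^t\ge\gamma^2P/B^2$ so that $\E\|\widetilde{\bm z}_i^t\|^2\le d\sigma^2B^2/(\gamma^2P)$, and unroll a geometric contraction to get the $\rho^t$ term plus the accumulated noise sum $\sum_\tau\rho^{t-\tau}$. The difference is in how the contraction is established. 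The paper treats the two pieces as \emph{independently} contracting scalar recursions, citing \cite[Theorem 1, Lemma 2]{qu2017harnessing} for the consensus error and \cite[Proposition 1]{xin2020GTSAGA} for the mean error, and obtains the $N/(N-1)$ prefactor from a Young's inequality with $\eta=N-1$; you instead make the coupling among consensus error, gradient-tracking error, SAGA-table error, and mean error explicit through a $4\times4$ linear inequality system whose spectral radius must be driven below $\rho$ by the step-size choice. Your version is essentially the content of the cited references carried out in-line: it is more self-contained and more honest about the fact that $T_1$ and $T_2$ do not actually decouple (the consensus error feeds the mean recursion through the gradient dispersion, and vice versa), and it correctly isolates the table rate $1-1/M$ as the source of the $\mathcal{O}(1/(\mu M))$ step-size cap. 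What it leaves implicit — the ``direct but tedious'' verification that $\bm A(\alpha)$ contracts under the stated step size — is precisely what the paper outsources to \cite{xin2020GTSAGA}, so neither write-up is fully self-contained on that point; yours at least names all the constants that must be pinned down.
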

\begin{proof}
See Appendix \ref{sec: proof of theorem 1}.
\end{proof}
Theorem \ref{thm: theorem 1} establishes the upper bound of estimation error for strongly convex and smooth local loss function in decentralized federated learning system.
The error bound is divided into the initial distance due to the error in original algorithm and the additive noise caused by the channel noise.
In addition, the initial distance shows that the AirComp-based DSGT-VR achieves the same convergence rate as that of the DSGT-VR algorithm, i.e., linear convergence rate.
Theorem \ref{thm: theorem 1} demonstrates that impact of the additive noise decreases as the number of devices increases. 

\section{Numerical Experiments}
\begin{figure*}[tbp]
        \centering
        \begin{minipage}{.32\textwidth}
                \centering
                \includegraphics[width=1\linewidth]{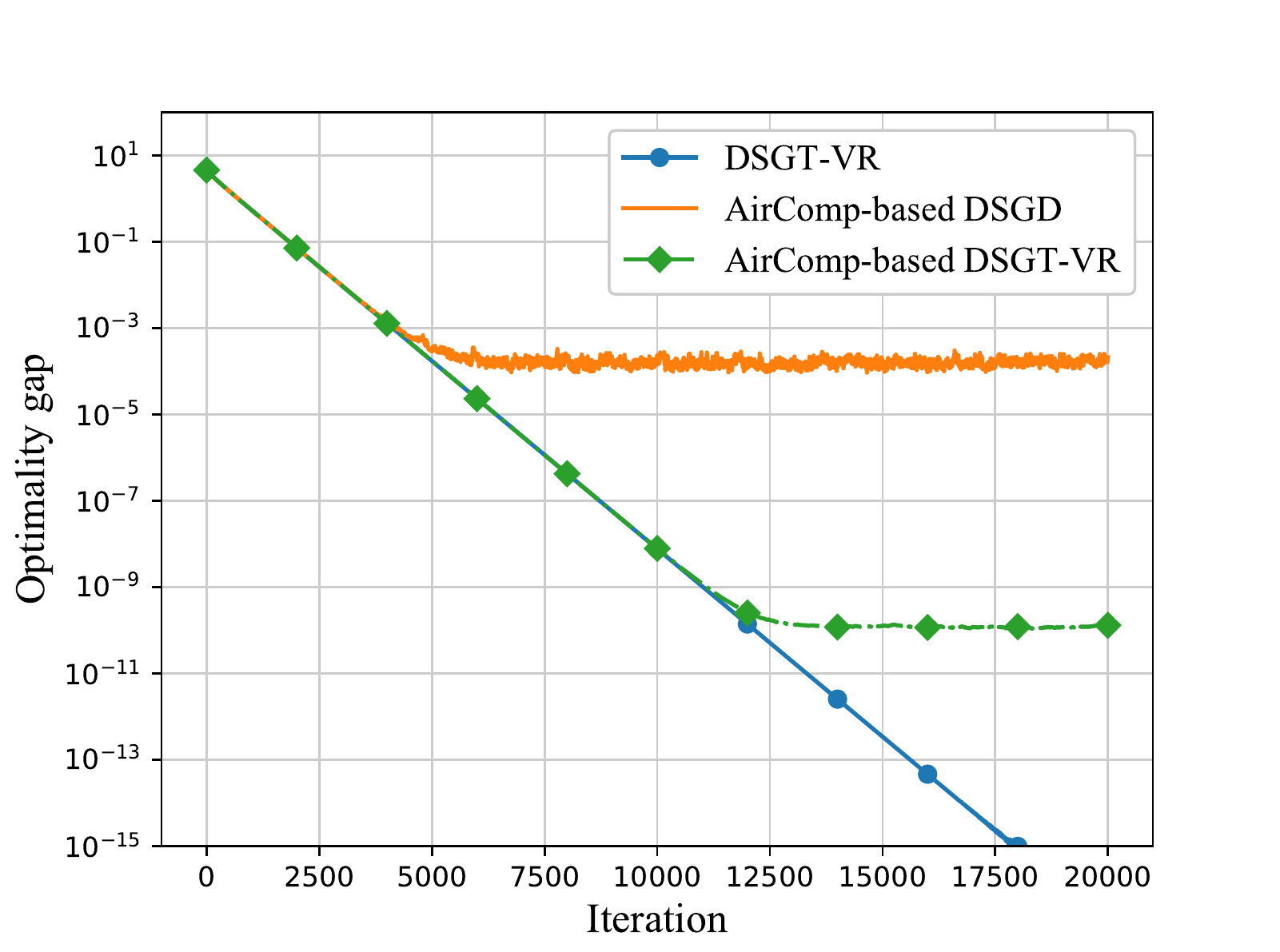}
                \caption{Optimality gap versus number \\ of consensus iteration $t$.}
                \label{fig:1}
        \end{minipage}
        \vspace{-2mm}
        \begin{minipage}{.32\textwidth}
                \centering
                \includegraphics[width=1\linewidth]{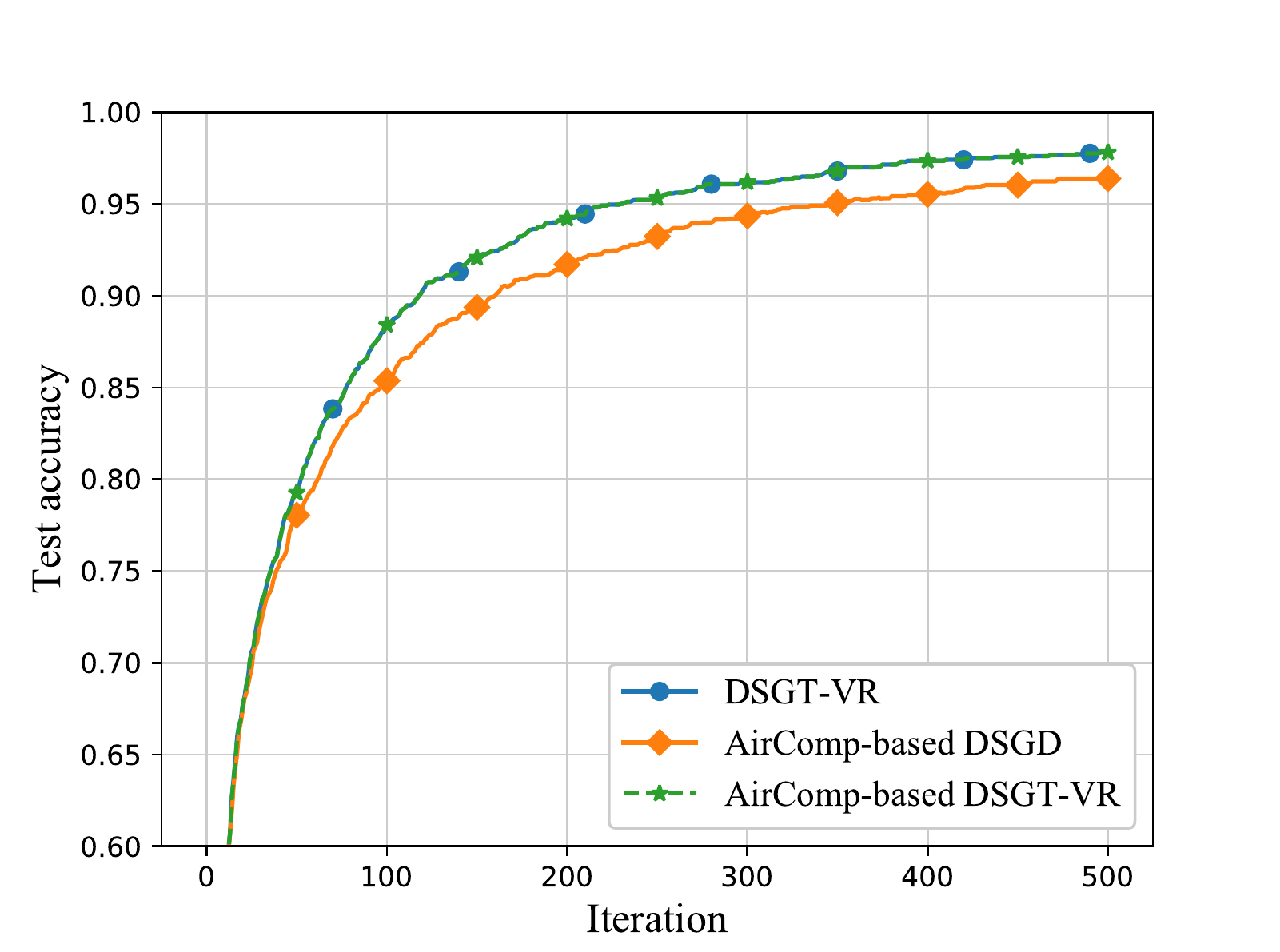}
                \caption{Test accuracy versus number \\ of consensus iteration $t$.}
                \label{fig:2}
        \end{minipage}
        \vspace{-2mm}
        \begin{minipage}{.32\textwidth}
                \centering
                \includegraphics[width=1\linewidth]{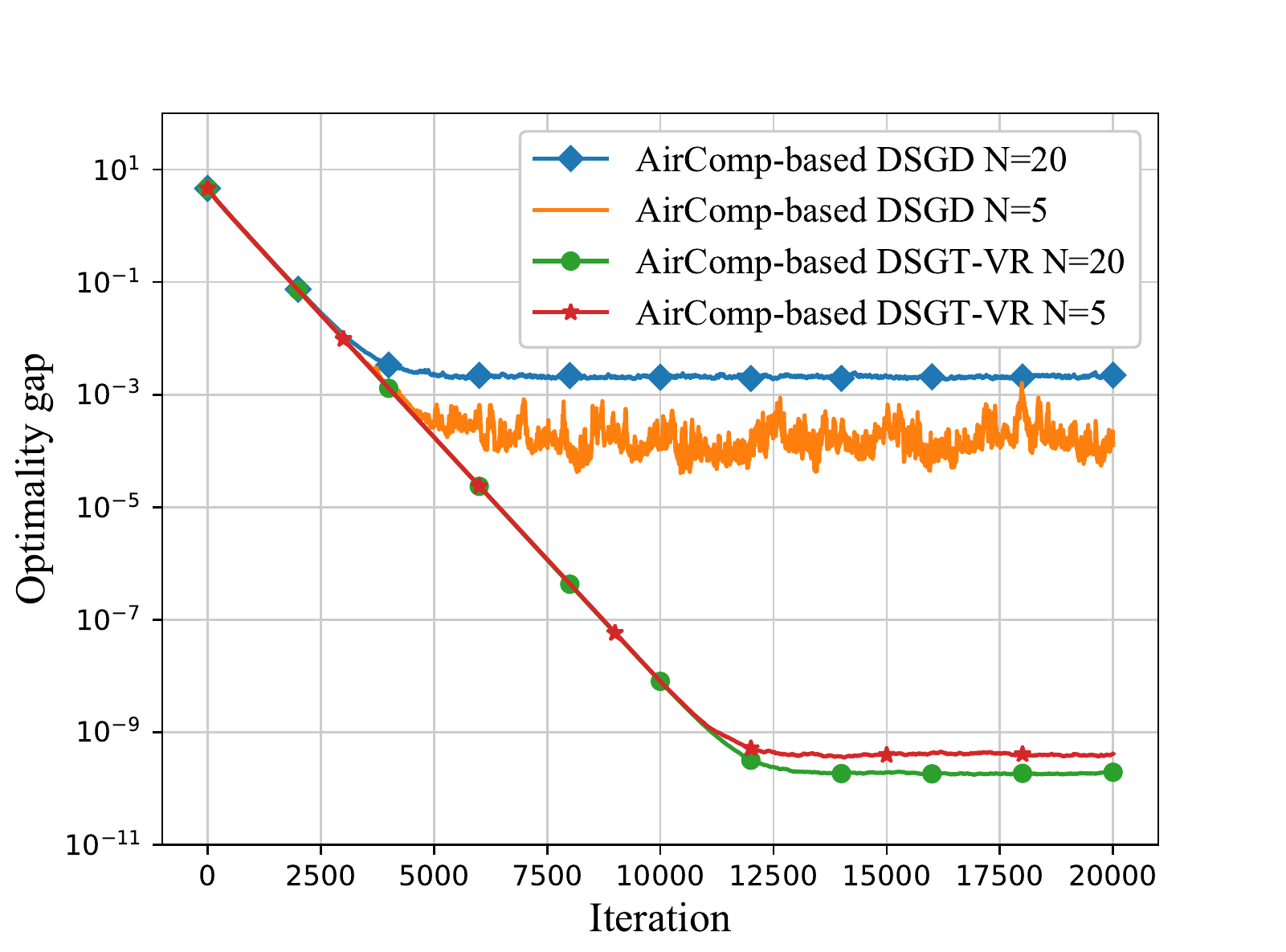}
                \caption{Optimality gap versus number \\ of consensus iteration $t$ with different $N$.}
                \label{fig:3}
        \end{minipage}
        \vspace{-2mm}
\end{figure*}
In this section, we numerically demonstrate the convergence behavior of the proposed AirComp-based DSGT-VR, and compare with the-state-of-art decentralized FL algorithms.
We consider a decentralized federated learning setting where $N = 20$ devices with $1000$ data samples locally at each device cooperatively train a regularized logistic regression model for binary classification,
\begin{align}
F(\bm{\theta}) = \frac{1}{N} \sum_{i=1}^{N} \frac{1}{m_i} \sum_{j=1}^{m_i} \log[1 + e^{-(\bm{a}_{ij}^T \bm{\theta})b_{ij}}] + \frac{\lambda}{2} \Vert \bm{x} \Vert_2^2, 
\end{align}
where device $i$ holds $m_i$ training samples $\{\bm{a}_{ij}, b_{ij}\}_{j=1}^{m_i} $, $\bm{a}_{ij}$ denotes the feature of $j$-th training sample at $i$-th device, $b_{ij} \in \{+1, -1\} $ is the corresponding
binary label and $\lambda$ is a regularization parameter.

For data sample, we use classes ``3" and ``5" in the MNIST dataset for binary classification.
We use $1000$ samples for training and $1968$ samples for testing, and each device is assigned by $40$ training samples.
We use the channel $h_{ij}^t \sim \mathcal{C}\mathcal{N}(0,1) $, noise $\sigma^2$ to be $ 0$ dBm and the threshold $\gamma = 0.5$.
In this paper, we adopted Laplacian matrix \cite{xiao2004fast} as mixing matrix for fast and provable convergence analysis. 
To stable the training process, all features are normalized to be unit vectors, i.e., $\Vert \bm{a}_{ij} \Vert_2 = 1, \forall i,j$, and set the regularization parameter $\lambda = \frac{1}{\sum_{i=1}^{N} m_i}$.
And we characterize the performance in terms of the optimality gap, i.e., $\E\left[F(\bm{\theta}_i^{t})\right] - F(\bm{\theta}^\natural)$.
To further verify the effectiveness of the proposed, we compare the proposed algorithm with the AirComp-based DSGD algorithm \cite{ozfatura2020decentralized, xing2020decentralized}, which is the state-of-the-art algorithm of the wireless decentralized federated edge learning system.

Fig. \ref{fig:1} shows that both AirComp-based DSGD and the proposed algorithm achieve a linear convergence rate, but the proposed algorithm can converge to a more accurate solution.
Specially, the error gap of AirComp-based DSGD almost reaches $10^{-4} $, while the proposed algorithm only has the optimality gap of $10^{-10}$.
However, due to the effect by fading channels and additive noise, there is still a gap between the solution obtained by proposed algorithm and optimal solution, which corresponds to our theoretical results.
And the test accuracy is evaluated versus the number of consensus iteration is illustrated in Fig. \ref{fig:2}.
Obviously, the proposed algorithm outperforms the AirComp-based DSGD algorithm while reaches the nearly the same accuracy of DSGT-VR algorithm.

In Fig \ref{fig:3}, we illustrate the impact of the different number of devices on optimality gap of the proposed algorithm in the same network topology.
Clearly, the proposed algorithm with $20$ devices reaches a more accurate solution than that with $5$ devices, which verifies our theoretical results.
However, the AirComp-based DSGD has the opposite result that the more devices the less accurate solution.

\section{Conclusions}
In this paper, we proposed a novel AirComp-based DSGT-VR algorithm to facilitate the consensus phase in decentralized FL over wireless networks, where both precoding and decoding strategies at devices are developed for D2D communications.
Furthermore, we proved the linear convergence rate of the proposed algorithm and provided the error bound to reveal the impact of fading channel and its noise.
Simulation results were conducted to verify the theoretical result and the superior performance of the proposed algorithm.
\vspace{-2mm}

\appendices
\section{Proof of Theorem \ref{thm: theorem 1}} \label{sec: proof of theorem 1}
Under assumption \ref{ass:1}, the global cost $F$ is also $L$-smooth and we denote $\bm{\theta}^\natural$ is the optimal solution of $F$, then by smoothness
\begin{align}\label{smooth}
\E\left[F(\bm{\theta}_i^{t})\right] - F(\bm{\theta}^\natural) \leq \frac{L}{2}\E \left[ \| \bm{\theta}_i^t - \bm{\theta}^{\natural} \|^2\right].
\end{align}
First, we introduce the auxiliary variable $\{ \overline{\bm{\theta}}^t \}$ which is mean of $\bm{\theta}_1^t, \cdots, \bm{\theta}_N^t$ in error-free case, then the error $\E\left[ \| \bm{\theta}_i^t - \bm{\theta}^{\natural} \|^2 \right], \forall i \in \mathcal{N}$ turn out to be
\begin{align}\notag
&\E\left[ \| \bm{\theta}_i^t - \bm{\theta}^{\natural} \|^2 \right] = \E\left[ \| \bm{\theta}_i^t - \overline{\bm{\theta}}^t + \overline{\bm{\theta}}^t - \bm{\theta}^{\natural} \|^2 \right] \\
&\overset{(\mathrm{a})}{\leq} \underbrace{ \frac{N}{N-1}\E\left[ \| \bm{\theta}_i^t - \overline{\bm{\theta}}^t \|^2 \right] }_{T_1} + \underbrace{ N\E \left[ \| \overline{\bm{\theta}}^t - \bm{\theta}^{\natural} \|^2 \right]  }_{T_2},
\end{align}
where $(\mathrm{a})$ comes from Young's inequality that $\| \bm{a} + \bm{b} \|^2 \leq (1+ \eta)\| \bm{a}\|^2 + (1 + \frac{1}{\eta})  \| \bm{b}\|^2, \forall \bm{a}, \bm{b} \in \R^{d} , \forall \eta \ge 0$ and we set $\eta = N-1$.
Then the error  $\E\left[ \| \bm{\theta}_i^t - \bm{\theta}^{\natural} \|^2 \right]$ is divided into two parts, i.e., $T_1$ and $T_2$.
Next, we focus on establish the upper bounds for $T_1$ and $T_2$.
Recall (\ref{decoding}), we have $\bm{\theta}_i^t = \sum_{j \in \mathcal{N}_i} w_{ij} \bm{\theta}_j^{t-1} + \widetilde{\bm{z}}_i^{t-1}$, then we can rewrite $\E\left[ \| \bm{\theta}_i^t - \overline{\bm{\theta}}^t \|^2 \right]$ in $T_1$ as follows
\begin{align*} \label{T1_1}
&\E\left[ \| \bm{\theta}_i^t - \overline{\bm{\theta}}^t \|^2 \right] 
\overset{(\mathrm{a})}{=} \E\left[ \| \sum_{j \in \mathcal{N}_i} w_{ij} \bm{\theta}_j^{t-1}- \overline{\bm{\theta}}^t \|^2 \right] + \E\left[ \| \widetilde{\bm{z}}_i^{t-1} \|^2 \right], \\
&\overset{(\mathrm{b})}{\leq} \rho \E\left[ \| \bm{\theta}_i^{t-1}- \overline{\bm{\theta}}^{t-1} \|^2 \right] + \frac{d \sigma^2 B^2}{\gamma^2 P},
\end{align*}
where $(\mathrm{a})$ comes from that noise $\widetilde{\bm{z}}_i^{t-1}$ is zero-mean and independent to model parameters and $(\mathrm{b})$ comes from \cite[Theorem 1, Lemma 2]{qu2017harnessing} with $0 < \rho < 1$ and the uniform scaling factor $\sqrt{p^t}$, i.e., $
\E\left[ \| \widetilde{\bm{z}}_i^{t-1} \|^2 \right] = \frac{d \sigma^2}{p^t} \leq \frac{d \sigma^2 B^2}{\gamma^2 P}$.
According to \cite[Propositon 1]{xin2020GTSAGA}, then if $\alpha = \min\{ \mathcal{O}(\frac{1}{\mu M}), \mathcal{O}(\frac{m(1-\beta)^2}{ML \kappa^2}) \}$, we have $ T_2 \le \rho \E \left[ N\| \overline{\bm{\theta}}^{t-1} - \bm{\theta}^{\natural} \|^2 \right] $.
Hence, we get
\begin{align*}
\E\left[ \| \bm{\theta}_i^t - \bm{\theta}^{\natural} \|^2 \right] 
&\le c \rho^t + \frac{N}{N-1} \frac{d \sigma^2 B^2}{\gamma^2 P} \sum_{\tau=1}^{t} \rho^{t - \tau},
\end{align*}
where $ c = \frac{N}{N-1} \| \bm{\theta}_i^0 - \overline{\bm{\theta}}^0 \|^2  + N \| \overline{\bm{\theta}}^0 - \bm{\theta}^{\natural} \|^2  $ and $0 < \rho <1$.
Based on (\ref{smooth}), we can obtain
\begin{align}
\E\left[F(\bm{\theta}_i^{t})\right] - F(\bm{\theta}^\natural) \leq \frac{cL}{2} \rho^t + \frac{LN}{2(N-1)} \frac{d \sigma^2 B^2}{\gamma^2 P} \sum_{\tau=1}^{t} \rho^{t - \tau}  ,
\end{align}
which completes the proof.

\bibliographystyle{IEEEbib}
\bibliography{strings,refs}

\end{document}